%% file: main.tex
\documentclass[letterpaper]{article} 
\usepackage{aaai2027}  
\usepackage[hyphens]{url}  
\usepackage{graphicx} 
\usepackage{natbib}  
\usepackage{caption} 
\usepackage{algorithm}
\usepackage{algorithmic}

\usepackage[inline]{enumitem}

\usepackage{amsmath}
\usepackage{cleveref}
\usepackage{amsthm}
    \usepackage{comment}
\usepackage{ltl}

\usepackage{newfloat}
\usepackage{listings}
\DeclareCaptionStyle{ruled}{labelfont=normalfont,labelsep=colon,strut=off} 
\floatstyle{ruled}
\newfloat{listing}{tb}{lst}{}
\floatname{listing}{Listing}

\usepackage{booktabs}

\title{A Hyper-Logical View on Multi-Agent Path Planning}
\author{
    Hadar Frenkel, Dor Atzmon
}
\affiliations{
    Bar-Ilan University

}

\begin{document}
\include{macros}

\maketitle

\begin{abstract}
\emph{Multi-Agent Path Planning} (MAPP) requires collision-free paths for multiple agents. MAPP has numerous real-life applications, including the navigation of automated vehicles, and has therefore many variants considering different constraints, requirements, and objectives. For each such variant, a new algorithmic treatment is often developed. This paper takes a complementary, specification-centered view: we show that many MAPP variants can be modeled in \emph{HyperLTL}, a temporal logic that reasons about multiple paths and their evolution through time, and in its $\omega$-regular extension HyperQPTL. The contribution is not a claim that current hyperproperty model checkers replace specialized MAPP solvers. Rather, the hyper-logical view gives a unified language for comparing MAPP variants, adding cross-agent constraints, expressing optimality, and connecting MAPP to verification, explanation, and causality techniques for hyperproperties.
\end{abstract}

\section{Introduction}
In \emph{Multi-Agent Path Planning} (MAPP), we require planning a set of paths for a set of agents without conflicts (collisions). MAPP is derived from various real-world applications in which physical agents must navigate without colliding with other agents. These applications range from indoor environments, such as robot navigation in automated warehouses~\cite{wolfgang,Li2020LifelongMP}, to planning outdoor routes for autonomous vehicles (e.g., cars, trains, or drones)~\cite{AtzmonDR19,UAV}.

MAPP is general and, due to its broad applicability, has many variants addressing different real-life tasks of planning conflict-free paths. A commonly known MAPP variant is \emph{Multi-Agent Path Finding} (MAPF)~\cite{stern2019multi}, in which each agent's path leads to its given goal. Optimally solving MAPF for standard objective functions is NP-hard~\cite{Surynek10short,YuL13}. Nevertheless, efficient algorithms optimally solve MAPF for many agents~\cite{ZhangLSKK20,LI2021103574,boyarski2022merging,branch-cut_price}. A prominent optimal MAPF algorithm is \emph{Conflict-Based Search} (CBS)~\cite{CBSJUR}. CBS plans a path for each agent separately. Then, repeatedly, it resolves conflicts between the paths by imposing constraints and replanning new paths, until conflict-free paths are found. Besides MAPF, we define eight additional MAPP variants (nine, including MAPF) studied in the literature. For most of them, a specialized algorithm was designed, often by adapting CBS or another MAPF technique to the specific variant.

This paper presents a different view on MAPP problems: instead of proposing another dedicated solver, we show that many MAPP variants can be expressed modularly in the temporal logic HyperLTL~\cite{ClarksonFKMRS14} and its omega-regular counterpart HyperQPTL~\cite{DBLP:phd/dnb/Rabe16}.
HyperLTL is an extension of the prominent logic LTL - Linear Temporal Logic~\cite{Pnueli77}, that can express the behavior of an agent over time, thus allowing us to express properties such as \emph{an agent starts in vertex $s$ and eventually reaches vertex $g$}, but also much more elaborate properties, as we show later in the paper. While LTL can reason about one path at a time, HyperLTL reasons about the relations between different paths, thus, it serves as a natural specification language for MAPP problems. Moreover, it allows us to express not only the requirements posed in the various MAPP instances described below, but we also show that HyperLTL can express additional constraints and behaviors that are not easily captured by CBS (see \Cref{sec:motivation}). 
In \Cref{sec:related}, we cover the related work regarding other connections between planning and temporal (hyper)properties.  
\paragraph{Contributions} 
(1) We present a modular translation of MAPP problems in HyperLTL or HyperQPTL (\Cref{sec:translation}), including examples of cross-agent constraints that are cumbersome to express in decoupled path-planning formalisms (\Cref{sec:motivation}); (2) We show how to express cost functions, and in particular makespan, as hyperproperties (\Cref{sec:makespan}); (3) We reduce finding an optimal solution for the modeled MAPP variants to HyperQPTL (or HyperLTL) model checking (\Cref{sec:MC}); (4) We clarify the intended role of the reduction: it is a solver-independent specification and reasoning layer, not an empirical claim that current hyperproperty tools outperform mature MAPP solvers.

\section{Preliminaries}

\begin{figure}[ht!]
\centering
\includegraphics[width=0.98\linewidth]{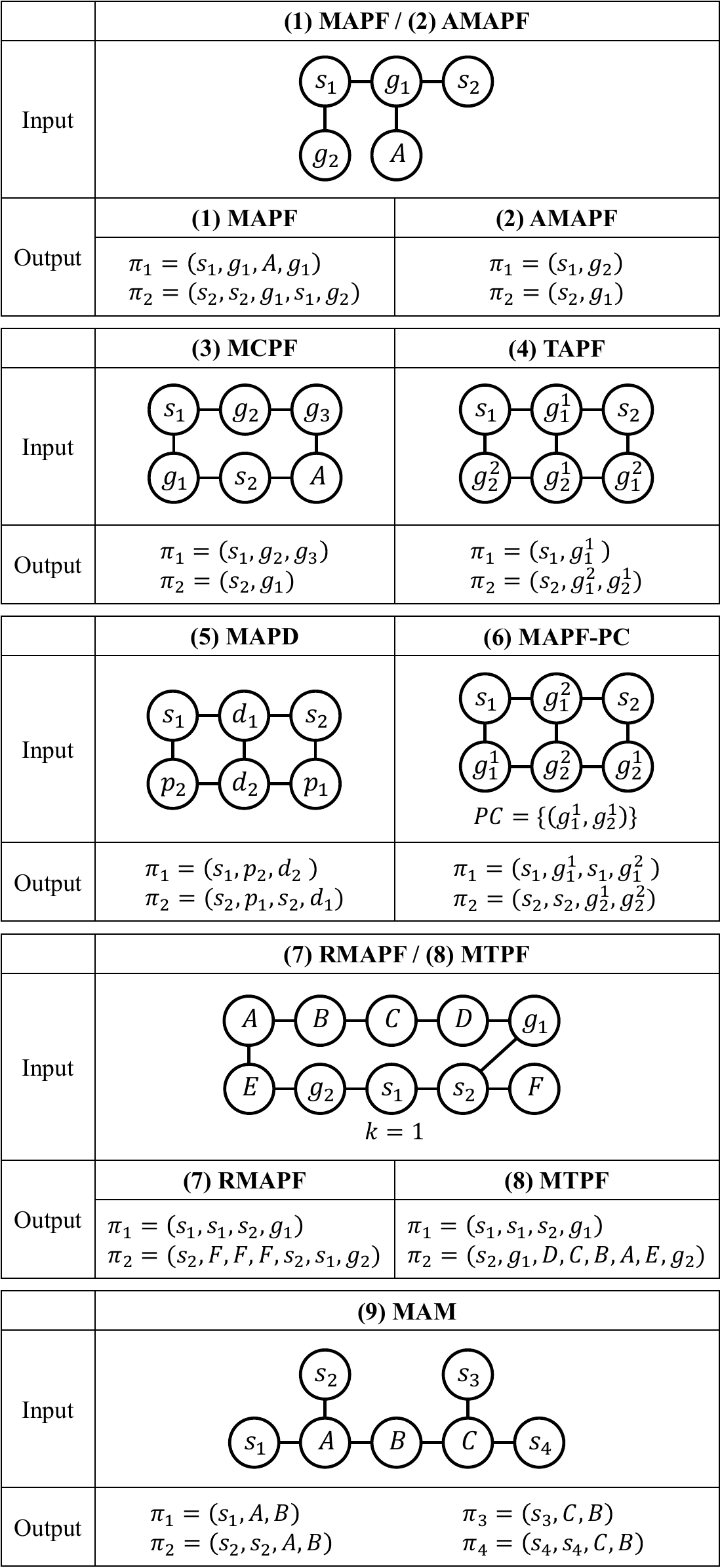}
\caption{Nine MAPP variants/problems.}
\label{fig:problems}
\end{figure}

\subsection{Planning}\label{sec:prelim:planning}
Given an undirected graph $\mathcal{G}=(V,E)$, a \emph{path} $\pi$ for agent $a$ is a list of vertices, where $\pi(t)$ is the vertex at timestep $t$. Between $t$ and $t+1$, agent $a$ either moves to an adjacent vertex ($(\pi(t),\pi(t+1))\in E$) or waits ($\pi(t)=\pi(t+1)$). The path cost is the number of actions in the path: $c(\pi)=|\pi|-1$.

For $n$ agents $A=(a_1,\dots,a_n)$ and start vertices $S=(s_1,\dots,s_n)$, a \emph{plan} $P=(\pi_1,\dots,\pi_n)$ has $\pi_i(0)=s_i$ for each agent $a_i$, and each agent waits at its last vertex forever. Paths $\pi_i$ and $\pi_j$ \emph{conflict} at timestep $t$ if $\pi_i(t)=\pi_j(t)$ (a \emph{vertex conflict}) or between $t$ and $t+1$ if $(\pi_i(t),\pi_i(t+1))=(\pi_j(t+1),\pi_j(t))$ (an \emph{edge conflict}). A MAPP \emph{solution} is a conflict-free plan. An \emph{optimal} MAPP solution minimizes a chosen objective. The standard MAPP objectives are \emph{sum-of-costs} (SOC), which equals the sum of paths' costs:  $C(P)=\sum_i c(\pi_i)$, and \emph{makespan}, which equals the maximum cost among them: $C(P)=\max_i c(\pi_i)$. All nine MAPP variants below use $\mathcal{G},A,S$ and add specific input. Figure~\ref{fig:problems} shows representative instances and optimal 
outputs.

\paragraph{(1) Multi-Agent Path Finding (MAPF)}
~\cite{stern2019multi} receives goals $G=(g_1,\dots,g_n)$ and requires each agent $a_i$ to end at its corresponding goal $g_i$: $\pi_i(|\pi_i|-1)=g_i$.


 
\paragraph{(2) Anonymous MAPF (AMAPF)}~\cite{kloder2006path} has the MAPF input but does not pre-assign goals to agents. 
Each agent reaches some $g_j\in G$: $\pi_i(|\pi_i|-1)=g_j$.

\paragraph{(3) Multi-Agent Combinatorial Path Finding (MCPF)}
~\cite{ren2023cbss} extends AMAPF to $m$ goals $G=(g_1,\dots,g_m)$, where agents may satisfy multiple goals or none. A solution requires every goal $g_j\in G$ to appear on some path $\pi_i$: $g_j \in \pi_i$.


\paragraph{(4) Combined Target-Assignment and Path Finding (TAPF)}~\cite{honig2018conflict} gives each agent a goal set. 
For $GSET=(gset_1,\dots,gset_n)$, each agent $a_i$ reaches one vertex in $gset_i=\{g_i^1,g_i^2,\dots\}$: $\pi_i(|\pi_i|-1)\in gset_i$.


\paragraph{(5) Multi-Agent Pickup and Delivery (MAPD)}~\cite{MAPD} requires fulfilling tasks, each with a pickup and delivery vertex. 
Here, $Tasks=(task_1,\dots,task_m)$ and $task_j=\tuple{p_j,d_j}$.
\footnote{MAPD is often a lifelong problem. We consider the \emph{one-shot} version; new task arrivals can be handled by re-solving.} Each task is fulfilled by some agent reaching $p_j$ at $t_1$ and $d_j$ at $t_2>t_1$: 
$\pi_i(t_1)=p_j \wedge \pi_i(t_2)=d_j \wedge t_1<t_2$.



\paragraph{(6) MAPF for Precedence-Constrained Goal Sequences (MAPF-PC)}~\cite{zhang2022multi} receives goal sequences $GSEQ=(gseq_1,\dots,gseq_n)$ and precedence constraints $PC=\{pc^{xy}_{ij},\dots\}$. 
 Let $t_i^x$ be the timestep in which the agent $a_i$ visits the goal $g_i^x$ in a given solution. 
 Each goal sequence $gseq_i=(g_i^1,g_i^2,\dots)$ must be visited by agent $a_i$ in order; that is, for any two goals $g_i^x,g_i^y\in gseq_i$ such that $x<y$, agent $a_i$ must visit goal $g_i^x$ before goal $g_i^y$ ($t_i^x<t_i^y$). 
 Each precedence constraint $pc^{xy}_{ij}=(g_i^x,g_j^y)$ requires agent $a_i$ to visit goal $g_i^x$ before agent $a_j$ visits goal $g_j^y$ ($t_i^x<t_j^y$).




\paragraph{(7) Robust MAPF (RMAPF)} 
~\cite{atzmon2020robust,chen2021symmetry} adds a delay parameter $k\geq 0$ to MAPF. A solution reaches the usual MAPF goals and satisfies $\pi_i(t_1)\neq\pi_j(t_2)$ for all $i\neq j$ whenever $|t_1-t_2|\leq k$.



\paragraph{(8) Multi-Train Path Finding (MTPF)}
~\cite{AtzmonDR19,chen2022multi} uses the RMAPF input, but each agent occupies its current vertex (head) and the previous $k$ vertices (tail). 
Agents must avoid conflicts with both heads and tails; hence, a RMAPF solution may fail for MTPF (Figure~\ref{fig:problems}).

\paragraph{(9) Multi-Agent Meeting (MAM)}
~\cite{atzmon2023conflict} has no goals; all agents must end at some meeting vertex $v_m$: $\pi_i(|\pi_i|-1)=v_m$. The agents do not conflict at vertex $v_m$.

\smallskip

AMAPF and MAM can be optimally solved in polynomial time. All other above variants are NP-hard to solve optimally.


\subsection{Logic}\label{sec:prelim:logic}
We use the notation $[n]$ to denote the set $\{ i ~|~ 1\leq i \leq n\}$.
\paragraph{Transition Systems and Traces}
Let ${AP}$ be a finite set of atomic propositions (in our setting, the atomic propositions will denote the agents' locations). 
A \emph{labeled transition system} (LTS) $T = (Q, Q_0, \delta, L)$ is a tuple where $Q$ is a finite set of states, $Q_0\subseteq Q$ is a set of initial states, $\delta \subseteq Q\times Q$ is a transition relation, and $L: Q \rightarrow 2^{AP}$ is a labeling function, that labels each state with a subset of atomic propositions. 

A \emph{trace} $L(q_0) L(q_1) L(q_2) \cdots \in (2^{AP})^\omega$ of $T$ is an infinite sequence with $q_0\in Q_0$ and $(q_t,q_{t+1})\in\delta$ for all $t\geq 0$.\footnote{We use the standard infinite-word semantics, which coincide with finite-word semantics for the problems studied here.} 
We denote the set of all traces of a transition system $T$ as $\TR(T)$.

\paragraph{HyperLTL}
HyperLTL~\cite{ClarksonFKMRS14} extends LTL~\cite{Pnueli77} with trace quantification, enabling relational properties over executions. Given atomic propositions $AP$ and trace variables $\mathcal{V}$, its syntax is as follows; $q$ and $\exists q$ are used only later for HyperQPTL:
%

\begin{equation}
\begin{aligned}
  &\psi := p_\pi ~|~ \psi \vee \psi ~|~ \neg\psi ~|~ \LTLnext \psi ~|~ \psi\LTLuntil \psi ~|~ q\\
  &\varphi := \exists \pi. \varphi ~|~ \forall \pi. \varphi ~|~ \psi ~|~ \exists q. \varphi 
\end{aligned}
\tag{$\star$}\label{eq:hyperLTL}
\end{equation}
Here $p\in AP$; $\pi\in\mathcal{V}$; and $\LTLnext,\LTLuntil$ are the temporal operators \emph{next} and \emph{until}; we use the usual Boolean abbreviations, and the temporal derivations $\LTLeventually \psi=\mathit{true}\LTLuntil\psi$ (\emph{eventually}), and $\LTLglobally\psi=\neg\LTLeventually\neg\psi$ (\emph{globally}). Intuitively, $p_\pi$ refers to proposition $p$ on trace $\pi$, and $\exists\pi$/$\forall\pi$ quantify over system traces. 
\begin{example}
    In Problem (1) of \Cref{fig:problems}, trace $\pi_1$ satisfies $s_1$ at time $0$, so $\pi_1 \vDash s_1$; also $\pi_1 \vDash \LTLnext g_1 \wedge \LTLeventually A$, 
     since $g_1$ holds at the second timestep, and $A$ eventually holds on $\pi_1$. In addition, 
 $\pi_2 \vDash \LTLglobally \neg A$
since $A$ never occurs on $\pi_2$. 
\end{example}

The semantics are given w.r.t. a trace-set $\mathcal{T}$ and an assignment $\Pi:\mathcal{V}\rightarrow\mathcal{T}$. Let $\Pi[\pi\mapsto\tau]$ update $\pi$ to $\tau$, and let $\Pi[t,\infty)$ map every variable to the suffix of its trace from position $t$. Then:
\begin{align*}
    &\Pi, \mathcal{T} \vDash p_\pi &\text{ iff } & p\in\Pi(\pi)(0) \\ 
    &\Pi, \mathcal{T} \vDash \neg \psi &\text{ iff } &\Pi,\mathcal{T} \nvDash \psi \\
    & \Pi, \mathcal{T} \vDash \psi_1 \vee \psi_2  &\text{ iff } &  \Pi, \mathcal{T} \vDash \psi_1 \text{ or } \Pi, \mathcal{T} \vDash \psi_2 \\
       &\Pi, \mathcal{T} \vDash \LTLnext \psi &\text{ iff } & \Pi[1,\infty),\mathcal{T} \vDash \psi \\ 
      &\Pi, \mathcal{T} \vDash \psi_1 \LTLuntil \psi_2 &\text{ iff } & \exists t_2\in\mathbb{N} \text{ s.t. } \Pi[t_2,\infty),\mathcal{T} \vDash \psi_2 \\ &&&\text{ and } \forall t_1< t_2: \Pi[t_1,\infty),\mathcal{T} \vDash \psi_1 \\ 
      &\Pi, \mathcal{T} \vDash \exists \pi. \varphi &\text{ iff } & \text{ there exists } \tau\in\mathcal{T} \text{ s.t. } \Pi[\pi\mapsto \tau],\mathcal{T}\vDash \varphi \\
       &\Pi, \mathcal{T} \vDash \forall \pi. \varphi &\text{ iff } & \text{ forall } \tau\in\mathcal{T}: \Pi[\pi\mapsto \tau],\mathcal{T}\vDash \varphi
\end{align*}
We say that $\mathcal{T} $ satisfies $\psi$, denoted $\mathcal{T} \vDash \psi$, if $\Pi_\emptyset, \mathcal{T} \vDash \psi$ where $\Pi_\emptyset: \mathcal{V}\rightarrow \emptyset$ is the empty assignment, meaning that $\psi$ contains no free trace variables that need to be assigned.
We say that a transition system $T$ satisfies a HyperLTL formula~$\psi$ if $\TR(T) \vDash\psi$.

\paragraph{HyperQPTL} 
HyperQPTL~\cite{DBLP:phd/dnb/Rabe16} extends HyperLTL with 
quantification over {atomic propositions}, to express $\omega$-regular hyperproperties, similar to the omega-regular extension of QPTL~\cite{QPTL} over LTL~\cite{LTL}. 

Formulas are defined over $AP\cup AP'$, where $AP'$ contains quantified propositions disjoint from $AP$. We define the semantics of 
$q$ and $\exists q$ from $(\star)$, 
for $q\in AP'$, using a fresh trace variable $\pi_f$:
\begin{align*}
          &\Pi, \mathcal{T} \vDash q &\text{ iff } & q\in\Pi(\pi_f)(0) \\
          &\Pi, \mathcal{T} \vDash \exists q. \varphi &\text{ iff } & \text{ exists } \tau\in (2^{AP'})^\omega \text{ s.t. } \Pi[\pi_f\mapsto \tau],\mathcal{T}\vDash \varphi 
\end{align*}
Quantified propositions are global auxiliary labels; for example, they can mark the timestep at which an agent satisfies its goal. In Section~\ref{sec:modeling}, we use HyperLTL unless stated otherwise.\footnote{HyperQPTL does not add computational complexity~\cite{DBLP:phd/dnb/Rabe16}; Since HyperLTL is a more common formalism, we choose to use it when possible.}
\begin{example}\label{ex:qptl}
In Problem (1) of \Cref{fig:problems}, $a_1$ visits $g_1$ on the second timestep but satisfies it only on the fourth, after clearing it for $a_2$. We mark this satisfying visit by a quantified proposition $q_1$, enforced to occur once:
\begin{align*}
    \exists q_1. \big(\neg q_1 \LTLuntil (q_1 \wedge \LTLnext \LTLglobally \neg q_1) \big) \wedge \LTLeventually (q_1 \wedge {g_1}_{\pi_1})
\end{align*}
The first conjunct states that $q_1$ appears exactly once; the second states that this unique mark coincides with $g_1$ on $\pi_1$.
\end{example}

\paragraph{Syntactic Sugar}
We use $\pi_1=\pi_2$ for $\LTLglobally \bigwedge_{p\in AP}(p_{\pi_1}\leftrightarrow p_{\pi_2})$ and $\pi_1\neq\pi_2$ for $\LTLeventually \bigvee_{p\in AP}(p_{\pi_1}\leftrightarrow\neg p_{\pi_2})$.
We use $\LTLnext^i$ for $i$ applications of the \emph{next} operator.


\section{Modeling MAPP in HyperLTL}\label{sec:modeling}
In this section, we present simple and modular translations of the various 
 MAPP problems into respective HyperLTL formulas. 
We separate between the core planning problem and optimality w.r.t. the cost function, allowing modularity both between different problems and different cost functions. We discuss translations of cost functions in Section~\ref{sec:cost}. 

\subsection{Motivation} \label{sec:motivation}
The main goal of this paper is to make the logical structure of MAPP variants explicit. A hyper-logical specification separates what is common across variants (start conditions, goal satisfaction, and collision avoidance) from what changes between them. Thus, moving from one variant to another often amounts to replacing a local component of the specification rather than redesigning an algorithm from scratch. This is useful even before considering tool performance: it gives a concise way to compare variants, expose hidden assumptions, identify redundant constraints, and state new requirements without committing to a particular solver architecture.

This scope also explains why a large empirical comparison is not the right central test for the present contribution. Current HyperLTL and HyperQPTL model checkers are general hyperproperty-verification tools, and their worst-case complexity is high~\cite{CSL2020complexity,DBLP:phd/dnb/Rabe16}. We therefore do not claim that existing tools compete with mature MAPP solvers on standard benchmarks. Such an experiment would primarily measure the present maturity of hyperproperty model checking, rather than the value of the specification layer. The intended contribution is instead a formal bridge: once a MAPP variant is written as a decidable hyperproperty, existing and future verification procedures can be applied to it, and the formula itself can serve as a solver-independent reference specification.

Another key point in our modeling is that HyperLTL directly expresses relations between different paths. This obviously allows us to express MAPP problems, which inherently require reasoning about the paths of all agents; but it also allows us to express additional requirements on the paths of the agents, which are usually not easy to express by decoupled MAPP algorithms, like CBS, which repeatedly plans paths for single agents. For example, we can specify that the path of agent $a_1$ is always close to the path of agent $a_2$, by defining a set of neighbor vertices $N(v) = \{ v' ~|~ (v,v')\in E \}$, and defining the following HyperLTL constraint, where $\pi_i$ is the path of agent $a_i$ for $i=1,2$: 
\begin{align} \label{eq:cbs}
    \LTLglobally \bigwedge_{v\in V} (v_{\pi_1} \rightarrow \bigvee_{v'\in N(v)} v'_{\pi_2})
\end{align}
Through its whole path, if agent $a_1$ is in vertex~$v$, agent $a_2$ has to be in one of its neighboring vertices. 
We can conjunct \Cref{eq:cbs} with any of the formulas in \Cref{sec:translation}, to add this constraint to any of the problems. 
This is only one example of how the modeling as a formal specification allows us to easily add constraints that are not natural to solve otherwise. Similarly, reasoning patterns used inside MAPP algorithms, such as pairwise mutex reasoning or rectangle-style exclusions, can be described as relations among two or more traces under the constraints of a search node. The point is not that HyperLTL replaces these algorithmic ideas, but that it provides a common language in which pairwise and beyond-pairwise path relations can be stated, compared, and eventually verified.

\subsection{{Modeling the MAPP Instance as an LTS}}\label{sec:lts}
Since we evaluate HyperLTL formulas over LTSs, we first transform the MAPP instance into an LTS. 
The transformation is pretty straightforward, and we mention it here for completeness.
Let $\mathcal{G},A,S$ be a MAPP instance, with $\mathcal{G}=(V,E)$, $A=(a_1,\dots,a_n)$ and $S=(s_1,\dots,s_n)$. We denote $AP = V$, that is, the set of location vertices, and
we create an LTS $T_\mathcal{G} = (V, V, \delta_E, L)$
where $\delta_E = \{ (v, v')~|~ v=v' \vee (v,v') \in E \vee (v', v)\in E\}$, and $L(v) = \{ v\}$. The disjunct $v=v'$ captures wait actions. The set $\TR(T_\mathcal{G})$ contains only traces over singleton sets, which denote the location in the graph at each time point. Therefore, for simplicity, from now on we refer to traces over $AP^\omega$ rather than $(2^{AP})^\omega$.
\begin{observation}\label{obs:paths}
    The set $\TR(T_\mathcal{G})$ is equal to the set of all possible paths in the MAPP instance. 
\end{observation}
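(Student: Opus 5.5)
The plan is a double inclusion between $\TR(T_\mathcal{G})$ and the set of paths in the instance. First we fix what ``path'' means here, since the paper's paths are finite vertex lists while traces are infinite. In a plan each agent waits at its last vertex forever, so we identify a path $\pi$ with its infinite extension $\pi(0)\cdots\pi(|\pi|-1)\,\pi(|\pi|-1)^\omega$. An infinite sequence is then a path exactly when every consecutive pair is either a wait ($\pi(t)=\pi(t+1)$) or a move along an edge ($(\pi(t),\pi(t+1))\in E$). We also read ``possible paths'' as not restricted to the start vertices $S$, since $Q_0=V$; the start condition $\pi_i(0)=s_i$ will be imposed by the formulas, not by the LTS. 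Finally, as stated after the construction, we identify a singleton label $\{v\}$ with $v$, so traces are elements of $V^\omega$.

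For $\TR(T_\mathcal{G})\subseteq$ paths, take a trace $L(q_0)L(q_1)\cdots$ with $q_0\in V$ and $(q_t,q_{t+1})\in\delta_E$. Since $L(v)=\{v\}$ and $Q=V$, the trace is exactly the vertex sequence $q_0q_1\cdots$. Each step satisfies $q_t=q_{t+1}$, $(q_t,q_{t+1})\in E$, or $(q_{t+1},q_t)\in E$. Because $\mathcal{G}$ is undirected, the last two cases coincide, so every step is a wait or a legal move, and the sequence is a path.

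For the converse, take an infinite path $\pi$. Its initial vertex $\pi(0)$ lies in $V=Q_0$. Every step is a wait or an edge move, hence lies in $\delta_E$; the trailing waits are covered by the disjunct $v=v'$. So $\pi$ is a run of $T_\mathcal{G}$, and its trace under $L$ is $\pi$ itself.

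The only delicate step is the finite-versus-infinite identification. Two finite paths can have the same infinite extension, for example $\pi$ and $\pi$ with one extra terminal wait. These are the same path for planning purposes, although their costs differ; that difference is a cost issue handled in \Cref{sec:cost}, not a concern of this observation. Conversely, some infinite traces never stabilize and so correspond to no finite path. I would restrict the finite-path correspondence to eventually-constant traces and argue this is harmless, since the goal formulas only force behaviour up to some finite point, consistent with the footnote on infinite- versus finite-word semantics.
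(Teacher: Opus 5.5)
Your double inclusion is correct and is the unpacking of the construction that the paper leaves implicit, since it states the observation without proof: with $Q=Q_0=V$, $L(v)=\{v\}$, and $\delta_E$ consisting of self-loops plus both orientations of every edge, runs, traces, and wait-or-move walks from an arbitrary start vertex coincide by definition. Your closing caveat is a legitimate sharpening that the paper covers only through its footnote on infinite-word semantics (the equality is literal only if ``path'' means an infinite wait-or-move walk, with finite paths matching the eventually-constant traces up to trailing waits), so adopt that reading at the outset rather than first claiming that every locally valid infinite sequence is the extension of a finite path.
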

\subsection{Translation into HyperLTL}\label{sec:translation}
We now demonstrate how to translate different MAPP problems into HyperLTL and its variants. 
We use the notation $\varphi(\overline{\pi})$ to denote a formula $\varphi$ with free variables $\pi_1,\ldots,\pi_n$
(and similarly, $\varphi(\overline{\pi'})$ if the free variables are $\pi_1',\ldots,\pi_n'$, etc.). 
For all of the following problems, we use the classical assumption that once an agent reaches its goal, it stays there forever. 
At the end of this section, we discuss how we can express the problems under the assumption that after the agent reaches its goal, it clears it and can move freely. 

\paragraph{(1) MAPF} 
We start by modeling the basic MAPF problem with $n$ agents, as presented in Section~\ref{sec:prelim:planning}, as an HyperLTL formula:\footnote{Proper HyperLTL formulas have no free variables, we address this in Section~\ref{sec:MC}.} We recall that the basic problem defines a start vertex and a goal vertex for each agent, and requires that each agent reaches from its start vertex to its goal, without conflicts on the paths of agents. Then, we define: 
\begin{equation*}
   \varphi_{\text{MAPF}}(\overline{\pi}):=
  \varphi_{\text{start}}(\overline{\pi}) \wedge \varphi_{\text{goal}}(\overline{\pi}) \wedge \varphi_{\text{no-conf}}(\overline{\pi})
\end{equation*}
Where: 
\begin{equation*}
   \varphi_{\text{start}}(\overline{\pi}) := \bigwedge_{i\in[n]} {s_i}_{\pi_i}  \quad \quad
   \varphi_{\text{goal}}(\overline{\pi}) := \bigwedge_{i\in[n]} \LTLeventually \LTLglobally {g_i}_{\pi_i}
   \end{equation*}
   \begin{equation*}
   \varphi_{\text{no-conf}} (\overline{\pi}):= \LTLglobally \bigwedge_{i\neq j} \bigwedge_{v\in AP}\neg (v_{\pi_i} \wedge v_{\pi_j})
\end{equation*}
Intuitively, $\varphi_{\text{MAPF}}$ requires the existence of a path $\pi_i$ for each agent $a_i$, that starts from $a_i$'s start vertex (c.f. $\varphi_{\text{start}}$), ends at its goal vertex and stays there forever ($\varphi_{\text{goal}}$), and such that all paths are non-conflicting ($\varphi_{\text{no-conf}}$). $\varphi_{\text{MAPF}}$ refers to vertex conflict, but we can easily change it to express edge conflict by changing $\varphi_{\text{no-conf}}$ to $\varphi_{\text{no-conf}}'$ as follows: 
   \begin{equation*}
   \varphi_{\text{no-conf}}'(\overline{\pi}) := \LTLglobally \bigwedge_{i\neq j} \bigwedge_{v,v'\in AP}\neg \big((v_{\pi_i} \wedge v'_{\pi_j}) \wedge \LTLnext (v'_{\pi_i} \wedge v_{\pi_j}) \big)
\end{equation*} 
We use formulas $\varphi_{\text{start}}, \varphi_{\text{goal}},\varphi_{\text{no-conf}}$ throughout the paper to model the various problems, demonstrating the modularity of using HyperLTL as an underlying specification language. 
\begin{theorem}\label{thm:mapf}
    Let $\overline{\pi} = (\pi_1, \ldots, \pi_n )$, let $\tuple{\mathcal{G},A,S,G}$ be an MAPF instance, 
    and let $T_\mathcal{G}$ be a transition system as described in \Cref{sec:lts}. 
    Then, $T_\mathcal{G}$ satisfies $\varphi_{\text{MAPF}}(\overline{\pi})$, iff  $\pi_1, \ldots, \pi_n$ represent a solution to the MAPF problem. 
\end{theorem}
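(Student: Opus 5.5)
I would prove the theorem by unfolding the semantics of $\varphi_{\text{MAPF}}$ under an assignment $\Pi$ that maps each free variable $\pi_i$ to a trace $\tau_i\in\TR(T_\mathcal{G})$. The satisfaction statement then reads $\Pi,\TR(T_\mathcal{G})\vDash\varphi_{\text{MAPF}}(\overline{\pi})$; the free variables are closed off later in \Cref{sec:MC}. The argument rests on a correspondence between infinite traces and finite MAPF paths.

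\textbf{The correspondence.} By \Cref{obs:paths}, every trace in $\TR(T_\mathcal{G})$ is a sequence of vertices in which consecutive positions are equal or adjacent, so it is a legal (infinite) path. For the forward map, a finite path $\pi_i$ ending in $g_i$ is sent to the trace $\pi_i\cdot g_i^\omega$. This is exactly the convention that each agent waits at its last vertex forever, and it is a trace of $T_\mathcal{G}$ because $\delta_E$ contains self-loops. For the backward map, a trace $\tau_i$ that eventually stays at $g_i$ forever is cut at the least $t_i$ such that $\tau_i(t)=g_i$ for all $t\ge t_i$, which gives the finite path $\tau_i(0)\cdots\tau_i(t_i)$. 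The remainder of the proof checks that the two maps translate each conjunct into the corresponding MAPF requirement.

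\textbf{Conjunct by conjunct.} Because traces are over singletons, $v_{\pi_i}$ holds at time $t$ iff $\tau_i(t)=v$. Then:
\begin{itemize}
\item $\varphi_{\text{start}}$ holds iff $\tau_i(0)=s_i$ for all $i$, which is the plan condition $\pi_i(0)=s_i$.
\item $\LTLeventually\LTLglobally {g_i}_{\pi_i}$ holds iff some suffix of $\tau_i$ is constantly $g_i$. Under the correspondence this is equivalent to $\pi_i(|\pi_i|-1)=g_i$ together with waiting there forever.
\item $\varphi_{\text{no-conf}}$ holds iff for every $t$ and every $i\neq j$ we have $\tau_i(t)\neq\tau_j(t)$. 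This is exactly the absence of vertex conflicts, including conflicts after an agent has already arrived, since conflicts are evaluated on the extended paths.
\end{itemize}
If edge conflicts are also required, the same unfolding of $\varphi'_{\text{no-conf}}$ gives the swap condition at $t,t+1$. Combining the three items yields both directions of the equivalence.

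\textbf{Main obstacle.} The delicate step is reconciling finite paths, which have a cost and an endpoint, with infinite traces. Two things must be checked:
\begin{itemize}
\item The truncation point must not affect conflict-freeness. Conflicts must be checked on the stay-forever extension, and the definition of a plan says exactly this.
\item Stated "iff $\pi_1,\ldots,\pi_n$ represent a solution", the correspondence is not a bijection, since one finite path can be padded with trailing waits. I would therefore state it up to trailing wait actions at the goal, taking the minimal truncation as the canonical representative.
\end{itemize}
Cost is irrelevant here, because optimality is treated separately in \Cref{sec:cost}.
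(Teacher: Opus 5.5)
Your proposal is correct and follows essentially the same route as the paper: both rely on Observation~\ref{obs:paths} and check conjunct by conjunct that $\varphi_{\text{start}}$, $\varphi_{\text{goal}}$ and $\varphi_{\text{no-conf}}$ encode the start, reach-and-stay-at-goal, and vertex-conflict-freeness requirements. Your explicit padding/truncation correspondence between finite paths and infinite traces, and your remark that edge conflicts need the extra conjunct $\varphi'_{\text{no-conf}}$, make precise what the paper's proof leaves implicit.
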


\begin{proof}
We use Observation~\ref{obs:paths}: 
If $\varphi_{\text{MAPF}}(\overline{\pi})$ holds in $T_\mathcal{G}$, we have, directly from the definition of $\varphi_{\text{MAPF}}$, that  $\pi_i, i\in[n]$ is a valid path for agent $a_i$: It starts from $s_i$ and eventually arrives at $g_i$ (and from then, $g_i$ holds globally, that is, the agent stays at $g_i$ forever). Due to $\varphi_{\text{no-conf}}$, for $i\neq j$, paths $\pi_i$ and $\pi_j$ cannot be at the same vertex $v$ at the same time, i.e., all paths are conflict-free. To conclude, we have that $\pi_1, \cdots, \pi_n$ represent valid conflict-free paths for agents $a_1, \cdots, a_n$. 

In the other direction, if there is a solution to the MAPF problem, then 
there are paths $\pi_1, \cdots, \pi_n$ satisfying the formulas $\varphi_{\text{start}}, \varphi_{\text{goal}}, \varphi_{\text{no-conf}}$ (again, due to Observation~\ref{obs:paths}). In particular, we note that $\varphi_{\text{goal}}$ requires that the agents \emph{eventually} arrive at the goal and remain there, however, they are allowed to pass through it before, and continue to move around, to allow other agents to satisfy their goal. 
\end{proof}
\paragraph{(2) AMAPF}
%
%
In the AMAPF problem, we do not pre-assign goals to agents. Thus, 
we define: 
\begin{align}
    \varphi_{\text{AMAPF}}(\overline{\pi}) := &
  \exists \pi_1' \ldots \exists \pi_n'. ~ \varphi_{\text{start}} (\overline{\pi}) \wedge \varphi_{\text{goal}} (\overline{\pi'}) ~\wedge
\\
&  \varphi_{\text{no-conf}}(\overline{\pi}) \wedge 
\\
  & \forall \tau .\exists \tau'. \bigwedge_{i\in[n]} \Big( (\tau = \pi_i ) \rightarrow\label{line:tau0} \\ & \bigvee_{j\in [n]} \big( (\tau' = \pi_j' ) \wedge (\tau = \tau')\big) \Big)
  \label{line:tau}
\end{align}
\emph{Correctness.} The traces $\pi_i$ start at the prescribed starts and are non-conflicting. The traces $\pi'_j$ reach the goals, and Equations~(\ref{line:tau0})-(\ref{line:tau}) require every $\pi_i$ to equal some $\pi'_j$. Since the $\pi_i$ traces are distinct and there are $n$ traces of each kind, this induces a bijective start-to-goal assignment. Thus, a separate $\varphi_{\text{no-conf}}(\overline{\pi'})$ conjunct is redundant and is intentionally omitted.

\paragraph{(3) MCPF} 
Here, we get a list of not pre-assigned goals to satisfy, where one agent can satisfy several goals, or no goals. Then, we define the formulas $\varphi_{\text{goal-MCPF}}$ and $\varphi_{\text{MCPF}}$:
$$\varphi_{\text{goal-MCPF}}(\overline{\pi}) := \bigwedge_{j\in[m]} \bigvee_{i\in[n]} \LTLeventually  {g_j}_{\pi_i} $$
\begin{equation*}
    \varphi_{\text{MCPF}}(\overline{\pi}) :=  
  \varphi_{\text{start}} (\overline{\pi}) \wedge \varphi_{\text{goal-MCPF}}(\overline{\pi}) 
  \wedge \varphi_{\text{no-conf}} (\overline{\pi})
\end{equation*}
\emph{Correctness.} $\varphi_{\text{goal-MCPF}}$ states that each goal is eventually reached by some agent; the usual start and no-conflict conjuncts give the remaining requirements.
\paragraph{(4) TAPF} 
Here, each agent has several goals, and it must reach one of them. We define $\varphi_{\text{TAPF}}$ and $\varphi_{\text{goal-TAPF}} $: 
$$ \varphi_{\text{goal-TAPF}} (\overline{\pi})
:= \bigwedge_{i\in[n]} \bigvee_{g \in gset_i} \LTLeventually \LTLglobally {g}_{\pi_i}$$
\begin{equation*}
    \varphi_{\text{TAPF}}(\overline{\pi}) :=  
  \varphi_{\text{start}} (\overline{\pi})\wedge \varphi_{\text{goal-TAPF}}(\overline{\pi})
  \wedge \varphi_{\text{no-conf}} (\overline{\pi})
\end{equation*}
\emph{Correctness.} $\varphi_{\text{goal-TAPF}}$ states that each agent reaches and remains at one of its allowed goals; the usual start and no-conflict conjuncts complete the encoding.
AMAPF can also be encoded through the TAPF pattern by taking $gset_i=G$ for every agent $a_i$ (under the standard assumption that agents remain at their assigned goals, no-conflict forces different agents to settle at different goals). We keep the separate AMAPF encoding because it makes the anonymous assignment explicit through equality of start-side and goal-side traces. This illustrates a modeling choice rather than a different computational assumption.
\paragraph{(5) MAPD} 
Here, we have a set of pickup-delivery assignments, each should be performed by some agent. Then, similarly to MCPF, we define the formulas $ \varphi_\text{goal-MAPD}$ and $ \varphi_\text{MAPD}$: 
\begin{equation*}
    \varphi_\text{goal-MAPD}(\overline{\pi}):= \bigwedge_{j\in[m]}\bigvee_{i\in[n]}\LTLeventually ({p_j}_{\pi_i} \wedge \LTLeventually {d_j}_{\pi_i})
\end{equation*}
%
\begin{equation*}
    \varphi_{\text{MAPD}}(\overline{\pi}) :=  
    \varphi_{\text{start}} (\overline{\pi}) \wedge \varphi_\text{goal-MAPD}(\overline{\pi})   \wedge \varphi_{\text{no-conf}} (\overline{\pi})
\end{equation*}
\emph{Correctness.} $\varphi_\text{goal-MAPD}$ requires each task $\langle p_j,d_j\rangle$ to be assigned to some agent that eventually reaches $p_j$ and later $d_j$; $\varphi_\text{start}$ and $\varphi_\text{no-conf}$ provide the standard path constraints.


\paragraph{(6) MAPF-PC} 
To express the precedence constraints, we need to use HyperQPTL (Section~\ref{sec:prelim:logic}).
Let $AP' = \{ q_i ~|~ i\in[n] \}$, i.e., we add a quantified proposition for each agent. 
Let $gseq_i= (g_i^1, \cdots, g_i^{m_i})$.
For each~$i$ we define the formula $\psi_i$, that states that $q_i$ occurs exactly $m_i$ times on a trace:
\begin{align}\label{line:PC1}
  \psi_i:= \overbrace{  \neg q_i \LTLuntil (q_i \wedge \LTLnext (\neg q_i \LTLuntil ( q_i \wedge \LTLnext (\cdots  \wedge \LTLnext(\neg q_i \LTLuntil (q_i }^{\neg q_i\LTLuntil q_i \text{ appears }  m_i \text{ times}}\\ \wedge \LTLnext \LTLglobally \neg q_i)\cdots
 )\label{line:PC2}
\end{align}
Next, 
for each agent $a_i$, we define the formula $\varphi_{\text{gseq}}(\pi_i)$, stating that goals are reached in the order specified by $gseq_i$. Since generally, an agent is allowed to visit some vertex $g_j$ on its way to satisfy a goal $g_i$ for $i <j$, we use the quantified proposition $q_i$ to indicate the timestep in which the agent visits the vertex when \emph{satisfying} the goal. Due to $\psi_i$, an agent can satisfy exactly $m_i = | gseq_i|$ goals. 
Thus we define: 
\begin{align*}
 \varphi_{\text{gseq}}(\pi_i) :=~ &{s_i}_{\pi_i} \wedge   \LTLeventually (({g_i^1}_{\pi_i}\wedge q_i) \wedge \\ &\LTLeventually ( ({g_i^2}_{\pi_i}\wedge q_i) \wedge \LTLeventually ( \cdots \wedge \LTLeventually ({g_i^{m_i}}_{\pi_i}\wedge q_i))\cdots)    
\end{align*}
For each constraint 
$(g_i^x, g_j^y)$ 
we define the formula: 
\begin{align*}
    \varphi_{(g_i^x, g_j^y)}(\pi_i, \pi_j) := \LTLeventually \big(( {g_i^x}_{\pi_i} \wedge q_i ) \wedge  \LTLeventually ( {g_j^y}_{\pi_j} \wedge q_j )\big) 
\end{align*}
which states the precedence between two satisfied (expressed by the $q_i, q_j$ propositions) goals. 

We are now ready to define a HyperQPTL formula to capture the whole problem, as follows:
\begin{align*}
    &\varphi_{\text{MAPF-PC}}(\overline{\pi}) := \exists q_1 \ldots \exists q_n. ~ 
    \bigwedge_{i\in [n]} \psi_i ~\wedge ~\varphi_{\text{no-conf}}  (\overline{\pi})~\wedge 
     \\ &\bigwedge_{i\in [n]} \varphi_{\text{gseq}}(\pi_i) ~\wedge
    \bigwedge_{(g_i^x, g_j^y) \in PC}  \varphi_{(g_i^x, g_j^y)}(\pi_i, \pi_j)
\end{align*}
\emph{Correctness.} $\varphi_{\text{gseq}}(\pi_i)$ enforces the order of the goals in $gseq_i$, while $q_i$ marks the timesteps at which goals are satisfied rather than merely visited. The formulas $\varphi_{(g_i^x,g_j^y)}$ then compare these marked timesteps across agents, so a satisfying assignment to the $q_i$ propositions is exactly a choice of goal-satisfaction times that respects all precedence constraints.

\paragraph{(7) RMAPF}
Here, we add to the standard MAPF problem the constraint that the paths of two agents cannot collide over $k$ timesteps (and not only the current timestep). 
Thus, to express the RMAPF problem, we only need to redefine the formula expressing (vertex) conflict, as follows:
\begin{align*}
    \varphi_{\text{R-no-conf}} (\overline{\pi}) := 
\LTLglobally \bigwedge_{i\neq j} \bigwedge_{v\in AP}\neg \big(v_{\pi_i} \wedge ( \bigvee_{t\leq k } \LTLnext^t v_{\pi_j})\big)
\end{align*}

We use the formulas for start and goal vertex as in the MAPF problem, and thus we can express RMAPF as: 
\begin{align*}
        \varphi_{\text{RMAPF}} (\overline{\pi}) :=  \varphi_{\text{start}} (\overline{\pi})  ~\wedge~     \varphi_{\text{goal}} (\overline{\pi})~ \wedge  ~ \varphi_{\text{R-no-conf}} (\overline{\pi}) 
\end{align*}
\emph{Correctness.} $\varphi_{\text{R-no-conf}}$ forbids every other agent from occupying the same vertex within the next $k$ timesteps; start and goal satisfaction are unchanged from MAPF.
\paragraph{(8) MTPF} 
Here, each agent occupies $k~+~1$ vertices (their head, and $k$ vertices of tail). Therefore, to avoid conflicts, two agents cannot share their last $k+1$ vertices. To express this, we need to reason not only about the change over time, but also the change in \emph{locations} over time. 
We therefore modify the definition of conflict as follows:
%

\begin{align}
    &\varphi_{\text{MTPF-no-conf}} (\overline{\pi}): =  ~  \varphi_{\text{no-conf}} (\overline{\pi}) ~\wedge \label{line:no-conf} \\
    \label{line:train1}
    & \LTLglobally 
    \bigwedge_{i \in[n]}~ \bigwedge_{\substack{v_1,\ldots,v_{k+1}\in AP
    \\ \forall i\in[k]: v_i \neq v_{i+1}}} \bigg( \Big({v_1}_{\pi_i} \LTLuntil \big({v_2}_{\pi_i} \LTLuntil ( \cdots  \\   \label{line:train2}
    & \LTLuntil ({v_k}_{\pi_i} \LTLuntil  {v_{k+1}}_{\pi_i}) \cdots )
    \big) \Big)  \rightarrow   \\ \label{line:train3}
    & 
    \Big({v_1}_{\pi_i} \LTLuntil \big( ({v_2}_{\pi_i} \bigwedge_{j\neq i}\bigwedge_{t<2} \neg{v_t}_{\pi_j}  ) )\LTLuntil ( \cdots  \\ 
    \label{line:train4}& \LTLuntil ( ({v_k}_{\pi_i} \wedge \bigwedge_{j\neq i} \bigwedge_{t<k} \neg{v_t}_{\pi_j} ) )\LTLuntil  {v_{k+1}}_{\pi_i}) \cdots )
    \big) \Big) 
\end{align}

Again, the start and goal formulas are as before, and we can express the whole problem as:
\begin{align*}
    \varphi_{\text{MTPF}}  (\overline{\pi}): = 
    \varphi_{\text{start}} (\overline{\pi})  ~\wedge~     \varphi_{\text{goal}} (\overline{\pi})~ \wedge  ~ 
    \varphi_{\text{MTPF-no-conf}} (\overline{\pi})
\end{align*}

\emph{Correctness.} \Cref{line:no-conf} gives ordinary no-conflict. 
     The remaining conjunct specifies that any sequence of $k$ vertices occupied by some agent (Equations~(\ref{line:train1})-(\ref{line:train2})) cannot conflict, at any point in the sequence, with any other agent (Equations~(\ref{line:train3})-(\ref{line:train4})).
     The operator $\LTLuntil$, appearing $k$ times, tracks the change in \emph{locations}, rather than time (to track $k$ timesteps, it would have been sufficient to use the $\LTLnext$ operator).

\paragraph{(9) MAM}  
Here, we need to find a meeting point for all agents. Thus, we define: 
%
\begin{align}
  &  \varphi_{\text{MAM}}(\overline{\pi}) :=  
    ~\varphi_{\text{start}}(\overline{\pi}) ~\wedge ~ \\ &\bigvee_{v\in AP} \Big(\big( \LTLeventually \bigwedge_{i\in [n]} v_{\pi_i} \big) ~\wedge~ \label{line:meeting} \\ &
     \LTLglobally
    \big( \bigwedge_{i\neq j} \bigwedge_{\substack{v'\in AP \\ v' \neq v}} 
    \neg (v'_{\pi_i} \wedge v'_{\pi_j}) \big) \Big) \label{line:meetingconf}
\end{align}
\emph{Correctness.} \Cref{line:meeting} chooses a vertex where all agents eventually meet, while \Cref{line:meetingconf} forbids collisions at every other vertex.

\subsubsection{Other Assumptions on Reaching the Goal}\label{sec:reachesgoal}
Above, we have assumed that once an agent reaches its goal, it stays there, as stated in \Cref{sec:prelim:planning} (hence, we used $\LTLeventually \LTLglobally g$ for the \emph{goal} formulas). However, this can be easily changed to express other assumptions on reaching the goal. For example, if we assume that agent $a_i$ clears the goal vertex and continues to move in the plan, we simply change $\LTLeventually \LTLglobally g_{\pi_i}$ in $\varphi_{\text{goal}} $ to $\LTLeventually g_{\pi_i}$. This can be modified to other assumptions as well, demonstrating the usefulness of this modular modeling. 


\section{Cost Functions as Hyperproperties}\label{sec:cost}
Computing the cost of a solution is orthogonal to finding the solution itself. 
We focus on \emph{makespan}, as the \emph{sum-of-costs} (SOC) requires asynchronous reasoning which is out of the scope of this paper. We discuss SOC shortly in \Cref{sec:SOC}.

\subsection{Makespan} \label{sec:makespan}
A solution $\overline{\pi} = (\pi_1, \cdots, \pi_n)$ is more optimal than a solution  $\overline{\pi}' = (\pi'_1, \cdots, \pi'_n)$ w.r.t. the makespan function, if the longest path in $\overline{\pi}$, denoted by $\pi_\text{max}$, is shorter than the longest path in $\overline{\pi}' $, denoted by $\pi_\text{max}'$. Therefore, we have that $\forall i\in [n]: |\pi_i|\leq |\pi_\text{max}| \leq |\pi_\text{max}'|$. 
That is: $\overline{\pi}$ is an optimal solution iff in all other solutions $\overline{\pi}'$, there exists a path $\pi_\text{max}'$ that is longer than all paths $\pi_i, i\in [n]$. We can easily express this constraint in HyperLTL. 
%
%
For example, the following formula states that $\overline{\pi}$ is an optimal solution for the 
standard MAPF problem (the supplementary appendix provides similar translations for all variants discussed in the paper):
\begin{align}
   & \label{line:sol1} \varphi_{\text{MAPF-makespan}} (\overline{\pi}):=    \varphi_{\text{MAPF}}(\overline{\pi}) \wedge \\ & \forall \pi_1' \ldots \forall \pi_n'. 
   ~\varphi_{\text{MAPF}}(\overline{\pi}') \rightarrow \label{line:sol2}
   \\&\bigvee_{i\in[n]} \bigwedge_{j\in [n]} (\neg\LTLglobally {g_i}_{\pi_i'}) \LTLuntil ( \LTLglobally {g_j}_{\pi_j}) \label{line:makespan}
\end{align}
\emph{Correctness.} \Cref{line:sol1} states that $\overline{\pi}$ is a solution. Equations~(\ref{line:sol2})-(\ref{line:makespan}) state that in every other solution, some agent reaches its goal no earlier than all agents in $\overline{\pi}$, which is exactly makespan optimality.

\paragraph{Using Quantified Propositions}
The direct formula above compares goal satisfaction times using the concrete MAPF goal predicate. A more uniform HyperQPTL encoding labels, for each agent, the unique timestep at which its goal is satisfied, and then compares these labels across candidate solutions. This turns the makespan comparison into a problem-independent suffix of the formula; only the local definition of ``goal satisfaction'' changes between MAPP variants. The supplementary appendix gives the full labeled translations, including the MAPF-PC case where the last occurrence of the sequence marker $q_i$ is used instead of the goal label.

\subsection{Sum-of-Costs}\label{sec:SOC}
SOC requires accumulating all path lengths rather than comparing the latest goal-satisfaction time. Although the labels used for makespan can identify each individual path length, their sum requires asynchronous temporal reasoning over positions of different traces. This is beyond synchronous HyperLTL/HyperQPTL, but can be expressed in asynchronous temporal hyperlogics, e.g.~\cite{DBLP:conf/spin/BombardelliBST25}, by constructing and comparing a trace that represents the accumulated path lengths.

\subsection{Optimal MAPP via HyperQPTL Model Checking}\label{sec:MC}
We now combine \Cref{sec:translation} and \Cref{sec:makespan} to
express the fact that an optimal solution exists. Again, we use the basic MAPF as an example, but we can instantiate the following formula with any problem of \Cref{sec:translation}. In addition, we can use any cost function that we are able to define in a decidable hyperlogic (e.g., SOC), instead of makespan. We define:  
\begin{align*}
\psi_{\text{opt}} := \exists \pi_1 \ldots \exists \pi_n. \varphi_{\text{MAPF-makespan}}
\end{align*}


\begin{theorem}\label{thm:optimal}
Let $T_\mathcal{G}$ be a transition system as described in \Cref{sec:lts}. Then:
 $T_\mathcal{G} \models \psi_{\text{opt}}$ if and only if there is an optimal solution to the given MAPP problem. 
 In particular, if $\pi_1, \ldots, \pi_n$ are witnesses to the outermost existential quantifiers in $\psi_{\text{opt}}$, then $\pi_1, \ldots, \pi_n$ form such an optimal solution. 
\end{theorem}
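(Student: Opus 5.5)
The argument unfolds the semantics of $\psi_{\text{opt}}$ over $\TR(T_\mathcal{G})$ and uses Observation~\ref{obs:paths} to identify traces with MAPF paths (extended by waiting forever at the last vertex). Theorem~\ref{thm:mapf} then identifies assignments satisfying $\varphi_{\text{MAPF}}(\overline{\pi})$ with solutions. For a solution $\overline{\pi}$, write $t_i(\overline{\pi})$ for the least $t$ such that $\pi_i$ satisfies $\LTLglobally g_i$ from position $t$. This time exists by $\varphi_{\text{goal}}$, and it equals the cost $c(\pi_i)$ once the trailing waits at $g_i$ are trimmed. Hence the makespan of $\overline{\pi}$ is $M(\overline{\pi})=\max_i t_i(\overline{\pi})$.

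\textbf{Key lemma.} For solutions $\overline{\pi}$ and $\overline{\pi}'$, the formula in Equation~(\ref{line:makespan}) holds iff $M(\overline{\pi}) \le M(\overline{\pi}')$. Fix $i$. The conjunct for $j$, namely $(\neg\LTLglobally {g_i}_{\pi_i'}) \LTLuntil (\LTLglobally {g_j}_{\pi_j})$, holds iff there is a position $t_2$ from which $\pi_j$ stays at $g_j$ and at which $\neg\LTLglobally g_i$ holds on $\pi'_i$ at every earlier position. Since $\LTLglobally$-suffixes are upward closed, $\neg\LTLglobally {g_i}_{\pi'_i}$ holds exactly at positions $t<t_i(\overline{\pi}')$. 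Choosing the minimal witness $t_2=t_j(\overline{\pi})$, the conjunct holds iff $t_j(\overline{\pi}) \le t_i(\overline{\pi}')$. The disjunction over $i$ of the conjunction over $j$ therefore states that some $i$ satisfies $t_i(\overline{\pi}')\ge \max_j t_j(\overline{\pi})$, which is equivalent to $M(\overline{\pi}')\ge M(\overline{\pi})$. This lemma is the main obstacle. It requires checking the strict/non-strict boundary of the until operator, handling the minimal witness correctly, and covering the degenerate case $t=0$, where $\LTLglobally g_j$ already holds initially.

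With the lemma in hand, the universal block in Equations~(\ref{line:sol2})--(\ref{line:makespan}) says that every solution $\overline{\pi}'$ has makespan at least $M(\overline{\pi})$. Together with Equation~(\ref{line:sol1}), this says that $\varphi_{\text{MAPF-makespan}}(\overline{\pi})$ holds iff $\overline{\pi}$ is a makespan-optimal solution.

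For ($\Rightarrow$) and the ``in particular'' clause: if $T_\mathcal{G}\models\psi_{\text{opt}}$, any witnesses $\pi_1,\dots,\pi_n$ satisfy $\varphi_{\text{MAPF-makespan}}$ and are therefore an optimal solution. For ($\Leftarrow$): an optimal solution, padded with infinite waiting, is a tuple of traces by Observation~\ref{obs:paths}, and it serves as the existential witness. If one also wants to argue that an optimal solution exists whenever some solution exists, note that makespans are natural numbers, so the minimum is attained. For the other variants, the same structure applies with $\varphi_{\text{MAPF}}$ replaced by the corresponding formula of \Cref{sec:translation}, with the goal label adjusted.
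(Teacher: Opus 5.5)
Your proposal is correct and takes the same route as the paper, whose proof is a one-line appeal to Theorem~\ref{thm:mapf} together with the informally stated correctness of the makespan conjunct in Equations~(\ref{line:sol2})--(\ref{line:makespan}). The only difference is detail: you verify that conjunct through the until semantics, showing that each $(i,j)$ conjunct holds iff $t_j(\overline{\pi})\le t_i(\overline{\pi}')$, which is exactly the ``some agent reaches its goal no earlier than all agents in $\overline{\pi}$'' claim the paper asserts without proof.
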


The proof of \Cref{thm:optimal} is straightforward given \Cref{thm:mapf} and the correctness of the corresponding formulas. 

\subsubsection{Quantifier Alternation}
 An important measure for the complexity of a HyperLTL or HyperQPTL formula is in the number of path-quantifier alternations: from existential to universal and vice versa.
 Each such alternation adds one exponent to the worst-case complexity of the model-checking problem. For example, the worst-case complexity of formulas of the type $\exists \pi \forall \pi'. \varphi$ where $\varphi$ is quantifier free, is exponentially harder than the worst case complexity of $\forall \pi'. \varphi$ formulas. Thus, we will always aim at formulas with minimal number of quantifier alternations. We observe that the modeling of almost all problems contains no path quantifiers within the formula, thus the only path quantification we introduce is the universal quantification in \Cref{sec:makespan} and the existential quantification of \Cref{thm:optimal}, resulting in an $\exists^*\forall^*$ formula, which is needed already for simpler problems than the ones we study here~\cite{BeutnerFink2025}.
We emphasize that reducing MAPP to HyperQPTL model checking is not an efficient solution given the current tools for hyperproperty verification, such as AutoHyper and related model checkers~\cite{DBLP:conf/tacas/BeutnerF23,DBLP:conf/lpar/BeutnerF23,DBLP:conf/tacas/HsuSB21,DBLP:conf/tacas/HsuSSB23}. This limitation is orthogonal to the modeling contribution: the formulas identify the exact hyperproperty that a solution must satisfy, while specialized MAPP algorithms remain the right baseline for large-scale performance. At the same time, the formulas isolate where complexity enters (for example, through quantifier alternation), which can guide future tool development for the reachability-heavy fragment used here.

\section{Related Work}\label{sec:related}

We consider a set of \emph{one-shot} MAPP variants, which end when the given tasks are fulfilled. However, \emph{online}~\cite{vsvancara2019online} and \emph{lifelong}~\cite{LifelongWan} variants introduce, arguably, the more realistic cases where new agents or tasks arrive over time, and a new solution must be computed. Our 
modeling can also be generalized for these cases by considering a sequence of one-shot problems.

In addition to search-based approaches, like CBS, there exist reduction-based approaches for optimally solving MAPP. These include reducing MAPP to \emph{Integer-Linear Program (ILP)}~\cite{yu2013multi}, Answer Set \emph{Programming (ASP)}~\cite{erdem2013general}, and \emph{SAT}~\cite{SurynekFSB16,PICAT_SOC}. Reducing MAPP to either of these problems, requires encoding temporal constraints, e.g., by adding many auxiliary variables. In contrast, HyperLTL and HyperQPTL offer a higher abstraction level, and natively express paths and the temporal constraints. 

Many works have studied the relation between planning and temporal properties. 
\cite{DBLP:conf/ifaamas/PontiggiaMAC025} study the connection between probabilistic hyperproperties and decentralized planning. While their work is concerned with undecidable problems in the probabilistic setting, 
we focus on the Boolean setting, 
and put more emphasis on modeling concrete MAPP problems, and, on studying a wider variety of problems. We show that we can express most problems in a modular way, as a conjunction of 
formulas, 
thus one can easily adjust only the relevant part(s) when introducing new problems.
\cite{icra20} study the connection between HyperLTL and robotics, and \cite{BeutnerFink2025} study the connection between hyperproperties and conformant planning.  While they discuss hyperproperties and planning, our work studies \emph{multi-agent} path planning. 
%
%
\cite{DBLP:conf/concur/BeutnerF21,DBLP:journals/lmcs/BeutnerF23}
present a hyperlogic to reason about multi-agent systems, in which they express properties such as information-flow and security properties, over multi-agent systems. However, they do not address real-life planning problems as we study here.

In contrast to all of the above, 
in this work, we study a different aspect of the relation between hyperlogics and multi-agent systems. We are interested in \emph{modeling existing problems from the planning community}, in an existing hyperlogic, thus establishing the tight connection between the two communities. We stress that all of the above works do not apply their results on actual planning problems, that are widely studied in the planning community. 
%
%
In addition, none of the above works concern cost functions and optimality of the solutions, as we discuss in \Cref{sec:cost}. 
This distinguishes our work from empirical MAPP papers: our comparison point is not the runtime of a new solver, but the expressive match between MAPP variants and temporal hyperproperties.

\section{Conclusion and Future Work}
In this paper, we have established the fundamental connection between \emph{Multi-Agent Path Planning} (MAPP) problems and hyperlogics. Modeling the different problems in a formal language helps expose the subtleties of the variants, remove accidental differences between formulations, and specify future problems using a clean and unambiguous formalism. The value of this view is therefore conceptual and methodological: it gives MAPP researchers a common specification language and gives the hyperproperties community a family of concrete planning problems with rich cross-trace structure.
A specific domain in which our study can be applicable is explainability and causality, where we plan to apply explainability techniques for hyperproperties \cite{hypervis,coenen2022hyperproperty} to the setting of MAPP. A MAPP solution is naturally represented by several witness traces, and the formulas in this paper state which relations among those traces make the plan valid. This creates a direct interface to formal explanations: one can ask which trace relations caused a conflict to be avoided, a goal to be delayed, or a group constraint to hold. 
In addition, we plan to study further constraints on the planning problems (as mentioned briefly in \Cref{sec:motivation}), and combine them with verification tasks, to use the full power of hyper-temporal specifications.





\bibliography{sample,aaai2027}


\newpage
\appendix
\input{appendix_for_arxiv}
\end{document}

%% file: macros.tex
\newcommand{\TR}{\text{TRACES}}
\newcommand{\tuple}[1]{\ensuremath{\left \langle #1 \right \rangle }}

\newtheorem{theorem}{Theorem}
\newtheorem{example}{Example}
\newtheorem{observation}{Observation}

%% file: appendix_for_arxiv.tex
\section*{Appendix}
\vspace{0.2cm}
\section{Full Translations for the Makespan Cost Function}
In this appendix, we provide the encoding of the makespan function to all variants we consider in the paper. We begin with the version that does not use quantified propositions, and then address the version with quantified propositions. 
\subsection{Direct Translation Using HyperLTL}\label{app:hyperltl}
 In this section, we cannot use the goal formula as a black-box, as we explain in \Cref{sec:makespan}. 
The translations that use quantified propositions
can use the goal function (and the whole formula describing the problem) as a black box, with a minor modification to the original formulas. In the formulas we present here, we often need to reason about the goal formula, and to define optimality with explicit reasoning over goal formulas. 
\paragraph{(1) MAPF}
We have shown the modeling of makespan for the MAPF problem in the body of the paper (\Cref{sec:makespan}), but we provide it here again for completeness. 
\begin{align*}
   &  \varphi_{\text{MAPF-makespan}} (\overline{\pi}):=    \varphi_{\text{MAPF}}(\overline{\pi}) \wedge \\ & \forall \pi_1' \ldots \forall \pi_n'. 
   ~\varphi_{\text{MAPF}}(\overline{\pi}') \rightarrow
   \\&\bigvee_{i\in[n]} \bigwedge_{j\in [n]} (\neg\LTLglobally {g_i}_{\pi_i'}) \LTLuntil ( \LTLglobally {g_j}_{\pi_j})
\end{align*}

\paragraph{(2) AMAPF}
Recall that:
\begin{align*}
    \varphi_{\text{AMAPF}}(\overline{\pi}) := &
  \exists \pi_1' \ldots \exists \pi_n'. ~ \varphi_{\text{start}} (\overline{\pi}) \wedge \varphi_{\text{goal}} (\overline{\pi'}) ~\wedge
\\
&  \varphi_{\text{no-conf}}(\overline{\pi}) \wedge 
\varphi_{\text{no-conf}}(\overline{\pi}') ~\wedge
\\
  & \forall \tau .\exists \tau'. \bigwedge_{i\in[n]} \Big( (\tau = \pi_i ) \rightarrow\\ & \bigvee_{j\in [n]} \big( (\tau' = \pi_j' ) \wedge (\tau = \tau')\big) \Big)
\end{align*}
Here, we need to explicitly reason about the existentially quantified traces $\pi'$ in $ \varphi_{\text{AMAPF}}(\overline{\pi})$, and where they reach their goal vertices, and not only the $\pi$ paths. We use $\overline{\tau}$ to denote non-optimal solutions (and respectively, let $\overline{\tau}'$ be the existentially quantified paths in $\varphi_{\text{AMAPF}}(\overline{\tau})$). 
\begin{align}
     & \varphi_{\text{AMAPF-makespan}} := \varphi_{\text{AMAPF}}(\overline{\pi}) \wedge \\
     & \forall \tau_1 \ldots \forall \tau_n. \varphi_{\text{AMAPF}}(\overline{\tau}) \rightarrow   \\
     &\bigvee_{i\in[n]} \bigwedge_{j\in [n]} (\neg\LTLglobally {g_i}_{\tau_i'}) \LTLuntil ( \LTLglobally {g_j}_{\pi_j'}) \label{eq:makespanAMAPF}
\end{align}
In \Cref{eq:makespanAMAPF} we compare the quantified path $\tau_i'$ to all quantified paths $\pi_j'$, since we use these quantified paths in $\varphi_{\text{AMAPF}}$ in the goal formula, and thus we want to make sure that there is a path  $\tau_i'$ that is longer than all paths $\pi_j'$. 

\paragraph{(3) MCPF} 
We recall the goal formula for MCPF, where the goals are not preassigned. 
$$\varphi_{\text{goal-MCPF}}(\overline{\pi}) := \bigwedge_{j\in[m]} \bigvee_{i\in[n]} \LTLeventually  {g_j}_{\pi_i} $$
Then, in a non-optimal solution $\overline{\pi}'$ there exists
a goal vertex $g_l$ such that for every agent $a_k$, either $a_k$ does not reach $g_l$, or it reaches it after some agent has reached it in the solution~
$\overline{\pi}$. 

Therefore: 
\begin{align*}
   &  \varphi_{\text{MCPF-makespan}} (\overline{\pi}):=    
    \varphi_{\text{MCPF}} (\overline{\pi})
  \wedge \\
   & \forall \pi_1' \ldots \forall \pi_n'. 
   ~\varphi_{\text{MAPF}}(\overline{\pi}') \rightarrow
   \\&\bigvee_{l\in[m]} \bigwedge_{k\in [n]} (\neg {g_l}_{\pi_k'}) \LTLuntil  (
   \bigwedge_{j\in [m]}  \bigvee_{i\in [n]} 
   {g_j}_{\pi_i}) 
\end{align*}

\paragraph{(4) TAPF} 
Here, each agent has several goals, and it must reach one of them. 
Then, in a non-optimal solution, 
there is an agent $a_k$ that does not reach \emph{any} of its goals until all agents in the optimal solution have reached some goal. 
Recall that:
$$ \varphi_{\text{goal-TAPF}} (\overline{\pi})
:= \bigwedge_{i\in[n]} \bigvee_{g \in gset_i} \LTLeventually \LTLglobally {g}_{\pi_i}$$
Then, we define: 
\begin{align*}
   &  \varphi_{\text{TAPF-makespan}} (\overline{\pi}):=    
    \varphi_{\text{start}} (\overline{\pi})
  \wedge \varphi_{\text{no-conf}} (\overline{\pi}) 
   \wedge \\ 
& \bigwedge_{i\in[n]} \bigvee_{g \in gset_i} \Big( \LTLeventually \LTLglobally {g}_{\pi_i} \wedge \\
   %
   & \forall \pi_1' \ldots \forall \pi_n'. 
   ~\varphi_{\text{MAPF}}(\overline{\pi}') \rightarrow
   \\&\bigvee_{k\in[n]} \bigwedge_{g' \in gset_k} (\neg\LTLglobally {g_l}_{\pi_k'}) \LTLuntil (  \LTLglobally{g}_{\pi_i}) \Big)
\end{align*}
The formula $ \varphi_{\text{TAPF-makespan}} $ is not in proper HyperLTL syntax, but we can bring it to proper syntax by moving the universal quantifiers outside (this does not affect the $\pi$ traces as they are not under the scope of this quantification).
\paragraph{(5) MAPD} 
In MAPD, we have a set of pickup-delivery assignments, each should be performed by some agent. Then, similarly to the MCPF, we have defined the goal formulas 
\begin{equation*}
    \varphi_\text{goal-MAPD}:= \bigwedge_{j\in[m]}\bigvee_{i\in[n]}\LTLeventually ({p_j}_{\pi_i} \wedge \LTLeventually {d_j}_{\pi_i})
\end{equation*}
In particular, in a solution $\overline{\pi}'$ which is not optimal, there is a delivery vertex $d_l$ and an agent $a_k$ that reaches $d_l$ after all delivery vertices have been reached in $\overline{\pi}$. Thus, it is enough to reason about the delivery points, and we get a formula very similar to this of the MCPF problem:
\begin{align*}
   &  \varphi_{\text{MAPD-makespan}} (\overline{\pi}):=    
    \varphi_{\text{MAPD}} (\overline{\pi})
  \wedge \\
   & \forall \pi_1' \ldots \forall \pi_n'. 
   ~\varphi_{\text{MAPF}}(\overline{\pi}') \rightarrow
   \\&\bigvee_{l\in[m]} \bigwedge_{k\in [n]} (\neg {d_l}_{\pi_k'}) \LTLuntil  (
 \bigwedge_{j\in [m]}  \bigvee_{i\in [n]} 
   {d_j}_{\pi_i}) 
\end{align*}

\paragraph{(6) MAPF-PC} 
If $\overline{\pi}'$ is a solution to the MAPF-PC, but it is not optimal, then we have that some agent $a_k$ in $\overline{\pi}'$ satisfies its last goal $g_k^{m_k}$ after all other agents have satisfied their last goal. Recall that we have also used the proposition $q_k$ to mark the timesteps on which the agents visit a vertex to satisfy a goal. 
Then, while the definition of MAPF-PC in \Cref{sec:translation} is quiet elaborated, to reason about optimality we only need to reason about the last goal vertex $g_k^{m_k}$ and the proposition $q_k$, and we get a formula very similar to this of the MAPF problem: 
\begin{align*}
   &  \varphi_{\text{MAPF-PC-makespan}} (\overline{\pi}):=    \varphi_{\text{MAPF}}(\overline{\pi}) \wedge \\ & \forall \pi_1' \ldots \forall \pi_n'. 
   ~\varphi_{\text{MAPF-PC}}(\overline{\pi}') \rightarrow
   \\&\bigvee_{k\in[n]} \bigwedge_{j\in [n]} (\neg({g_k^{m_k}}_{\pi_k'}\wedge q_k ) ) \LTLuntil ( {g_j^{m_j}}_{\pi_j} \wedge q_j)
\end{align*}

\paragraph{(7) RMAPF}
Here, the goal formula remains the same as in the MAPF problem, thus we have the same cost function as for the MAPF problem, when changing 
$  \varphi_{\text{MAPF}}$ to   
$\varphi_{\text{RMAPF}}$: 
\begin{align*}
   &  \varphi_{\text{RMAPF-makespan}} (\overline{\pi}):=    \varphi_{\text{RMAPF}}(\overline{\pi}) \wedge \\ & \forall \pi_1' \ldots \forall \pi_n'. 
   ~\varphi_{\text{RMAPF}}(\overline{\pi}') \rightarrow
   \\&\bigvee_{i\in[n]} \bigwedge_{j\in [n]} (\neg\LTLglobally {g_i}_{\pi_i'}) \LTLuntil ( \LTLglobally {g_j}_{\pi_j})
\end{align*}

\paragraph{(8) MTPF} For the multi-train path finding problem, we again use the simple goal formula, and simply define: 
\begin{align*}
   &  \varphi_{\text{MTPF-makespan}} (\overline{\pi}):=    \varphi_{\text{MTPF}}(\overline{\pi}) \wedge \\ & \forall \pi_1' \ldots \forall \pi_n'. 
   ~\varphi_{\text{MTPF}}(\overline{\pi}') \rightarrow
   \\&\bigvee_{i\in[n]} \bigwedge_{j\in [n]} (\neg\LTLglobally {g_i}_{\pi_i'}) \LTLuntil ( \LTLglobally {g_j}_{\pi_j})
\end{align*}
This demonstrates the modularity of the HyperLTL modeling: We can add many additional constraints to the problem, but as long as we keep the goal characterization relatively similar to previous problems, we can use the similar cost function. 

\paragraph{(9) MAM} Recall that 
\begin{align*}
  &  \varphi_{\text{MAM}}(\overline{\pi}) :=  
    ~\varphi_{\text{start}}(\overline{\pi}) ~\wedge ~ \\ &\bigvee_{v\in AP} \Big(\big( \LTLeventually \bigwedge_{i\in [n]} v_{\pi_i} \big) ~\wedge~  \\ &
     \LTLglobally
    \big( \bigwedge_{i\neq j} \bigwedge_{\substack{v'\in AP \\ v' \neq v}} 
    \neg (v'_{\pi_i} \wedge v'_{\pi_j}) \big) \Big) 
\end{align*}
That is, there exists a vertex $v$ which is the meeting point of all agents. Thus, in a non-optimal solution $\overline{\pi}'$, for every possible meeting vertex $v'$, it cannot be that all agents are in $v'$ using the solution $\overline{\pi}'$, until all of them meet in $v$ using the optimal solution  $\overline{\pi}$. 
%
We explicitly reason about the meeting vertex $v$: 
\begin{align*}
  & \varphi_{\text{MAM-makespan}} (\overline{\pi}) := 
  \varphi_{\text{MAM}}(\overline{\pi}) \wedge \\ 
& \forall \pi_1' \ldots \forall \pi_n'. 
   ~\varphi_{\text{MAM}}(\overline{\pi}') \rightarrow
  ~ \\ &\bigwedge_{v'\in AP} 
  (\neg \bigwedge_{i\in [n]} v'_{\pi'_i} ) \LTLuntil ( \bigwedge_{i\in [n]} v_{\pi_i} ) 
\end{align*}


\subsection{Translation Using HyperQPTL and Quantified Propositions}\label{app:hyperqptl}
The use of HyperQPTL is much cleaner than the use of HyperLTL for cost functions. Here, we do not need to reason about the goal, nor understand it. We simply add quantified propositions that mark the point in time in which an agent has reached its goal. 


We use the formula $ \varphi_{\text{once}_i}$, that states that the (quantified) proposition $l_i$ holds exactly once. $ \varphi_{\text{once}_i}$ will be used in the translation throughout this section. 
\begin{align*}
    \varphi_{\text{once}_i}:=  (\neg l_i) \LTLuntil (l_i \wedge \LTLnext \LTLglobally \neg l_i )
\end{align*}
In the following, we only need to show how we modify the goal formula for each problem (except for MAPF-PC that does not have a clear goal formula), to include the quantified propositions. 
For MAPF-PC, we label the last goal, as discussed in \Cref{sec:makespan}.

\paragraph{(1) MAPF} 
We define the labeled goal formula as: 
\begin{align*}
   & \varphi_{\text{L-goal}}(\overline{\pi}) := \exists l_1 \cdots \exists l_n.\bigwedge_{i\in[n]} \LTLeventually \big( (\LTLglobally  {g_i}_{\pi_i} ) \wedge {l_i}_{\pi_i} \big)\wedge 
   \bigwedge_{i\in[n]}   \varphi_{\text{once}_i}
\end{align*}
We then use the labeled formula $\varphi_{\text{L-goal}}$ instead of the original $\varphi_{\text{goal}}$ formula in \Cref{sec:translation}, resulting in the labeled formula $\varphi_{\text{L-MAPF}}$ for the MAPF problem: 
\begin{align*}
      \varphi_{\text{L-MAPF}}(\overline{\pi}):=
  \varphi_{\text{start}}(\overline{\pi}) \wedge \varphi_{\text{L-goal}}(\overline{\pi}) \wedge \varphi_{\text{no-conf}}(\overline{\pi})
\end{align*}
The only change is in using $\varphi_{\text{L-goal}}$ instead of $\varphi_{\text{goal}}$. 
We use $\varphi_{\text{L-goal}}$ in all formulas in which we have used $\varphi_{\text{goal}}$, as we show in the following.

\paragraph{(2) AMAPF} 
Since $ \varphi_{\text{AMAPF}}$ uses $\varphi_{\text{goal}}$, we can simply define $ \varphi_{\text{L-AMAPF}}$ by using  $\varphi_{\text{L-goal}}$. We note the equality over traces implies equality over both quantified and not quantified propositions. 
\begin{align*}
    \varphi_{\text{L-AMAPF}}(\overline{\pi}) := &
  \exists \pi_1' \ldots \exists \pi_n'. ~ \varphi_{\text{start}} (\overline{\pi}) \wedge \varphi_{\text{L-goal}} (\overline{\pi'}) ~\wedge
\\
&  \varphi_{\text{no-conf}}(\overline{\pi}) \wedge 
\varphi_{\text{no-conf}}(\overline{\pi'}) ~\wedge
\\
  & \forall \tau .\exists \tau'. \bigwedge_{i\in[n]} \Big( (\tau = \pi_i ) \rightarrow \\ & \bigvee_{j\in [n]} \big( (\tau' = \pi_j' ) \wedge (\tau = \tau')\big) \Big)
\end{align*}

\paragraph{(3) MCPF} Here, we need to add quantified propositions for the satisfaction of each goal, and not necessarily for each agent. In fact, we do not need to do this at all, as we can directly use the goal vertices (as, once an agent has reached the vertex, it has satisfied the goal), but we use the quantified propositions for consistency. Since in the formula $  \varphi_{\text{L-MAPF-makespan}} $ we reason about quantified propositions related to the agents, and not the goals, we manipulate the formula a little bit, such that a label $l_i$ holds for an agent $a_i$ at the last timestep in which the agent has satisfied a goal. Note that an agent can satisfy multiple goals, and then $l_i$ will label the last one; and it can satisfy no goals, and then $l_i$ will hold at the first timestep. 
\begin{align}
    &\varphi_{\text{L-goal-MCPF}}(\overline{\pi}) := 
\exists l_1 \cdots \exists l_n. 
   \bigwedge_{i\in[n]}   \varphi_{\text{once}_i} \wedge \\
&\bigwedge_{j\in[m]} \bigvee_{i\in[n]} \LTLeventually  ({g_j}_{\pi_i} \wedge l_i) ~\wedge \\ 
& \LTLglobally \Big( \bigwedge_{i\in[n]} \bigwedge_{j\in[m]} \big( ({g_j}_{\pi_i} \wedge l_i) \rightarrow \neg \LTLeventually \bigvee_{l\in[m]  } {g_l}_{\pi_i} \big) 
\Big) \label{eq:lastpos} ~\wedge \\
& \bigwedge_{i\in[n]}\big( (\LTLglobally \bigwedge_{j\in[m]} \neg  {g_j}_{\pi_i}) \rightarrow l_i    \big) \label{eq:firstpos}
\end{align}
\Cref{eq:lastpos} states that for an agent $a_i$, if the label $l_i$ holds when some goal $g_j$ is reached, then it is the last goal of agent $a_i$: There is no other goal vertex $g_l$ that is reached later on the trace $\pi_i$ of agent $a_i$. \Cref{eq:firstpos} states, for every agent $a_i$, that if no goal it ever satisfied on $\pi_i$, that is, agent $a_i$ satisfies no goal, then $l_i$ holds at the first position (under the scope of no temporal operators).
 
Then, we define: 
\begin{align*}
    \varphi_{\text{L-MCPF}}(\overline{\pi}) :=  
  \varphi_{\text{start}} (\overline{\pi}) \wedge \varphi_{\text{L-goal-MCPF}}(\overline{\pi}) 
  \wedge \varphi_{\text{no-conf}} (\overline{\pi}) 
\end{align*}

\paragraph{(4) TAPF} 
Here, each agent has several goals, and it must reach one of them, to which we add the label $l_i$:
\begin{align*}
&\varphi_{\text{L-goal-TAPF}} (\overline{\pi})
:= 
\exists l_1 \cdots \exists l_n. 
   \bigwedge_{i\in[n]}   \varphi_{\text{once}_i} \wedge \\
&\bigwedge_{i\in[n]} \bigvee_{g \in gset_i} \LTLeventually ((\LTLglobally {g}_{\pi_i}) \wedge l_i)
\end{align*} 
And then:
\begin{equation*}
    \varphi_{\text{L-TAPF}}(\overline{\pi}) :=  
  \varphi_{\text{start}} (\overline{\pi})\wedge \varphi_{\text{L-goal-TAPF}}(\overline{\pi})
  \wedge \varphi_{\text{no-conf}} (\overline{\pi})
\end{equation*}

\paragraph{(5) MAPD} 
In MAPD, we have a set of pickup-delivery assignments, each should be performed by some agent.
Again, an agent can preform no assignment, or several assignments. Thus, we use the same reasoning as in the MCPF problem, with the modification of taking $d_j$ instead of $g_j$:
\begin{align*}
    &\varphi_\text{L-goal-MAPD}(\overline{\pi}):= 
    \exists l_1 \cdots \exists l_n. 
   \bigwedge_{i\in[n]}   \varphi_{\text{once}_i} \wedge \\
   & \bigwedge_{j\in[m]}\bigvee_{i\in[n]}\LTLeventually ({p_j}_{\pi_i} \wedge \LTLeventually ({d_j}_{\pi_i}\wedge l_i))
    \\ 
& \LTLglobally \Big( \bigwedge_{i\in[n]} \bigwedge_{j\in[m]} \big( ({d_j}_{\pi_i} \wedge l_i) \rightarrow \neg \LTLeventually \bigvee_{l\in[m]  } {d_l}_{\pi_i} \big) 
\Big)  ~\wedge \\
& \bigwedge_{i\in[n]}\big( (\LTLglobally \bigwedge_{j\in[m]} \neg  {d_j}_{\pi_i}) \rightarrow l_i    \big) 
\end{align*}
Then, we define:
\begin{equation*}
    \varphi_{\text{L-MAPD}}(\overline{\pi}) :=  
    \varphi_{\text{start}} (\overline{\pi}) \wedge \varphi_\text{L-goal-MAPD}(\overline{\pi})   \wedge \varphi_{\text{no-conf}} (\overline{\pi})
\end{equation*}

\paragraph{(6) MAPF-PC} 
Recall that for the MAPF-PC problem, we have already used quantified propositions, to label the $m_i$ positions in which an agent satisfies a goal, using the formula $\psi_i$. Thus, we can simply label the last such position, by slightly modifying $\psi_i$, and adding a conjunct with $l_i$ at the last appearance of $q_i$ (end of \Cref{eq:pcaddli}):
\begin{align}
  &\psi_{L,i}:=   \varphi_{\text{once}_i} ~\wedge  \\
  \label{eq:pcaddli}&\overbrace{  \neg q_i \LTLuntil (q_i \wedge \LTLnext (\neg q_i \LTLuntil ( q_i \wedge \LTLnext (\cdots  \wedge \LTLnext(\neg q_i \LTLuntil ((q_i\wedge l_i) }^{\neg q_i\LTLuntil q_i \text{ appears }  m_i \text{ times}}\\ &\wedge \LTLnext \LTLglobally \neg q_i)\cdots
 ) 
\end{align}
For the complete $\varphi_{\text{L-MAPF-PC}}(\overline{\pi})$ formula, we use the formulas
$\psi_{L,i}$ for $i\in [n]$, and the formulas
$\varphi_{\text{gseq}}$ and $   \varphi_{(g_i^x, g_j^y)}$ as defined in \Cref{sec:translation}:
\begin{align*}
    &\varphi_{\text{L-MAPF-PC}}(\overline{\pi}) := \exists q_1 \ldots \exists q_n. \exists l_1 \ldots \exists l_n. ~ 
    \bigwedge_{i\in [n]} \psi_{L,i} ~\wedge \\ & \varphi_{\text{no-conf}}  (\overline{\pi})~\wedge 
     \bigwedge_{i\in [n]} \varphi_{\text{gseq}}(\pi_i) ~\wedge
    \bigwedge_{(g_i^x, g_j^y) \in PC}  \varphi_{(g_i^x, g_j^y)}(\pi_i, \pi_j)
\end{align*}

\paragraph{(7) RMAPF}
The formula for the robust MAPF uses the same goal formula as the MAPF problem, thus we can simply use $  \varphi_{\text{L-goal}} $ instead: 
\begin{align*}
        \varphi_{\text{L-RMAPF}} (\overline{\pi}) :=  \varphi_{\text{start}} (\overline{\pi})  ~\wedge~     \varphi_{\text{L-goal}} (\overline{\pi})~ \wedge  ~ \varphi_{\text{robust-no-conf}} (\overline{\pi}) 
\end{align*}

\paragraph{(8) MTPF}
Similarly, the multi-train path finding problem also uses the same goal formula, thus:
\begin{align*}
    \varphi_{\text{L-MTPF}}  (\overline{\pi}): = 
    \varphi_{\text{start}} (\overline{\pi})  ~\wedge~     \varphi_{\text{L-goal}} (\overline{\pi})~ \wedge  ~ 
    \varphi_{\text{MTPF-no-conf}} (\overline{\pi})
\end{align*}

\paragraph{(9) MAM}
In the meeting variant, we actually need only one proposition, that marks the point in time in which all agents arrive at the meeting vertex $v$. Again, for consistency, and for easier definition of the makespan formula, we use a quantified proposition $l_i$ for each agent $a_i$, and require that they all hold only once (\Cref{eq:mamonce}), and at the same time (\Cref{eq:mamtime}). Here we do not have a goal formula, so we modify the whole formula $ \varphi_{\text{MAM}}$: 

\begin{align}
  &  \varphi_{\text{L-MAM}}(\overline{\pi}) :=  \exists l_1 \cdots \exists l_n. \label{eq:mamonce}
   \bigwedge_{i\in[n]}   \varphi_{\text{once}_i} \wedge \\ 
    &\varphi_{\text{start}}(\overline{\pi}) ~\wedge ~ \\ &\bigvee_{v\in AP} \Big( \LTLeventually \label{eq:mamtime}
    \big( \bigwedge_{i\in [n]} v_{\pi_i} \wedge \bigwedge_{i\in [n]} l_i \big) ~\wedge~  \\ &
     \LTLglobally
    \big( \bigwedge_{i\neq j} \bigwedge_{\substack{v'\in AP \\ v' \neq v}} 
    \neg (v'_{\pi_i} \wedge v'_{\pi_j}) \big) \Big) 
\end{align}


\subsubsection{SUMMARY: Expressing the Makespan Cost Function Using HyperQPTL and Quantified Propositions}
In the above, we have defined new labeled formulas for each of the nine variants, such that the label $l_i$ holds at timestep $t$ iff agent $a_i$ satisfied (all of) its goal(s) at timestep $t$. That is, the timestep in which $l_i$ holds in a solution $\overline{\pi} = (\pi_1, \cdots, \pi_n)$ is exactly the length (and cost) of the path $\pi_i$. Thus, to make sure that
$ \overline{\pi}$ is more optimal than  $\overline{\pi}'$, we need the check whether there is an agent $a_k$ for which $l_k$ holds in  $\overline{\pi}'$ after all $l_i$'s occurred in $\overline{\pi}$. Here, we can use a generic formula $   \varphi_\text{L} $ that can be instantiated with any of the nine formulas above, that is: 
\begin{align*}
    \varphi_\text{L} \in \{  \varphi_{\text{L-MAPF}},   \varphi_{\text{L-AMAPF}},  \varphi_{\text{L-MCPF}},  \varphi_{\text{L-TAPF}}, 
  \varphi_{\text{L-MAPD}}, \\ \varphi_{\text{L-MAPF-PC}},
   \varphi_{\text{L-RMAPF}}, \varphi_{\text{L-MTPF}},  \varphi_{\text{L-MAM}}
\}
\end{align*}
For a (non-optimal) solution $\overline{\pi}'$, we denote its quantified propositions by $l_i', i\in[n]$. 
We then define optimality with respect to the makespan function as follows:
\begin{align*}
     &\varphi_{\text{L-makespan}} (\overline{\pi}):=    \varphi_{\text{L}}(\overline{\pi}) \wedge \\ & \forall \pi_1' \ldots \forall \pi_n'. \varphi_{\text{L}}(\overline{\pi}') \rightarrow \bigvee_{i\in[n]} \bigwedge_{j\in [n]}(\neg {l'_i}_{\pi_i'}) \LTLuntil {l_j}_{\pi_j}
\end{align*}